\def\eqprefix{Eq.~}
\def\figprefix{Fig.~}
\def\thmprefix{Thm.~}
\def\SIprefix{\S}
\newcommand\subfigname[1]{.#1}
\newtheorem{theorem}{Theorem}
\newtheorem{lemma}{Lemma}
\newcommand{\hg}{karigami}
\newcommand{\cell}{scissor}
\newcommand{\IC}[0]{intrinsic compatibility}
\newcommand{\KC}[0]{kinematic compatibility}
\newcommand{\SA}[0]{collapsibility}
\newcommand{\cij}[1]{\ensuremath{\text{C}_{\left(#1\right)}}}
\NewDocumentCommand\pivotrotationmap{mm}{
	f_{#2}^{(#1)}
}
\NewDocumentCommand\cellsymbol{m}{
	\IfNoValueTF{#1}
	{\ensuremath{\bm{S}}}
	{\ensuremath{\bm{S}_{#1}}}
}
\NewDocumentCommand\leg{m o}{
	\IfNoValueTF{#2}
	{\ensuremath{\bm{v}^{(#1)}}}
	{\ensuremath{\bm{v}^{(#1)}_{#2}}}
}
\NewDocumentCommand\vertex{m}{
\ensuremath{\bm{x}_{#1}}
}
\NewDocumentCommand\pivot{o}{
\ensuremath{\bm{p}_{#1}}
}
\NewDocumentCommand\leglength{m o}{
	\IfNoValueTF{#2}
	{\ensuremath{\ell^{(#1)}}}
	{\ensuremath{\ell^{(#1)}_{#2}}}
}
\NewDocumentCommand\boundary{o}{
	\IfNoValueTF{#1}
	{\ensuremath{t}}
	{\ensuremath{t_{#1}}}
}
\NewDocumentCommand\sboundary{o}{
	\IfNoValueTF{#1}
	{\ensuremath{s}}
	{\ensuremath{s_{#1}}}
}
\title{Additive design of 2-dimensional scissor lattices}
\author[1]{Noah Toyonaga}
\author[1,2,3]{L. Mahadevan}
\affil[1]{Department of Physics, Harvard University}
\affil[2]{Department of Organismic and Evolutionary Biology, Harvard University}
\affil[3]{School of Engineering and Applied Sciences, Harvard University}
\begin{document}

\twocolumn[
  \begin{@twocolumnfalse}
  \maketitle
    \begin{abstract}
We introduce an additive approach for the design of a class of transformable structures based on two-bar linkages (``scissor mechanisms'') joined at vertices to form a two dimensional lattice. 
Our discussion traces an underlying mathematical similarity between linkage mechanisms, origami, and kirigami and inspires our name for these structures: karigami. 
We show how to design karigami which unfold from a one dimensional collapsed state to two-dimensional surfaces of single and double curvature. 
Our algorithm for growing \hg{} structures is provably complete in providing the ability to explore the full space of possible mechanisms, and we use it to computationally design and physically realize a series of examples of varying complexity.
\end{abstract}
  \end{@twocolumnfalse}
]

Inspired by art \cite{Kasahara1999-fz, Lang2011-zm}, enabled by mathematics \cite{Callens2018-lv, Lang1996-td, Mahadevan2005-hv}, and egged on by the allure of applications \cite{Treml2018-kt, Rus2018-fm, Melancon2021-az, Bertoldi2017-xf}, scientists and engineers have investigated the principles which govern the paper arts of origami and kirigami and deployed these insights to design metamaterials with programmable shape and functionality \cite{Choi2019-nr, Dudte2021-hq, Dudte2023-ik}. It is useful to reiterate the basic ideas behind origami and kirigami which are distinguished by their fundamental kinematic mechanisms---the fold in origami, and the cut/pivot in kirigami\footnote{Indeed, the etymology of these art forms belies the science: the Japanese verb ``oru'' refers to folding while ``kiru'' refers to cutting. 
Taking literally the ``scissoring'' motion underlying the present mechanisms we propose the name Karigami from ``karu'', which refers to shearing.}
In the present work we move beyond cutting and folding and ask a related question: what rules govern two-dimensional structures constructed using a two bar linkage as the unit cell? 

The two bar linkage is one of the essential building blocks of complex mechanisms while also recognizable in the quotidian form of a pair of scissors.
Assemblies built from scissor-like subunits appear across natural and engineered systems:
from nanometer-scale contractile injection machinery of bacteriocidal viruses \cite{Toyonaga2024-ma},
to handheld pantographic drawing aids and architectural gridshells \cite{Sigrid2014-ij}.
Here we distill the general geometric principles which underlie these mechanisms assembled from lattices of scissors to develop an algorithm for the design of surfaces with programmable shape and functionality, drawing on two related intellectual genealogies: the engineering of kinematic linkages 
\cite{Kempe1875-be, McCarthy2010-qr, Moon2009-ql,Maden2011-nn, Chen2005-qo, Demaine2007-mg} and the design of architectural gridshells \cite{Sigrid2014-ij, Quinn2014-sy, Baek2018-pn, Sageman-Furnas2019-kf, Panetta2019-hw, Becker2023-lv}.

\begin{figure*}[hbt!]
\begin{center}
	\includegraphics[width=7in]{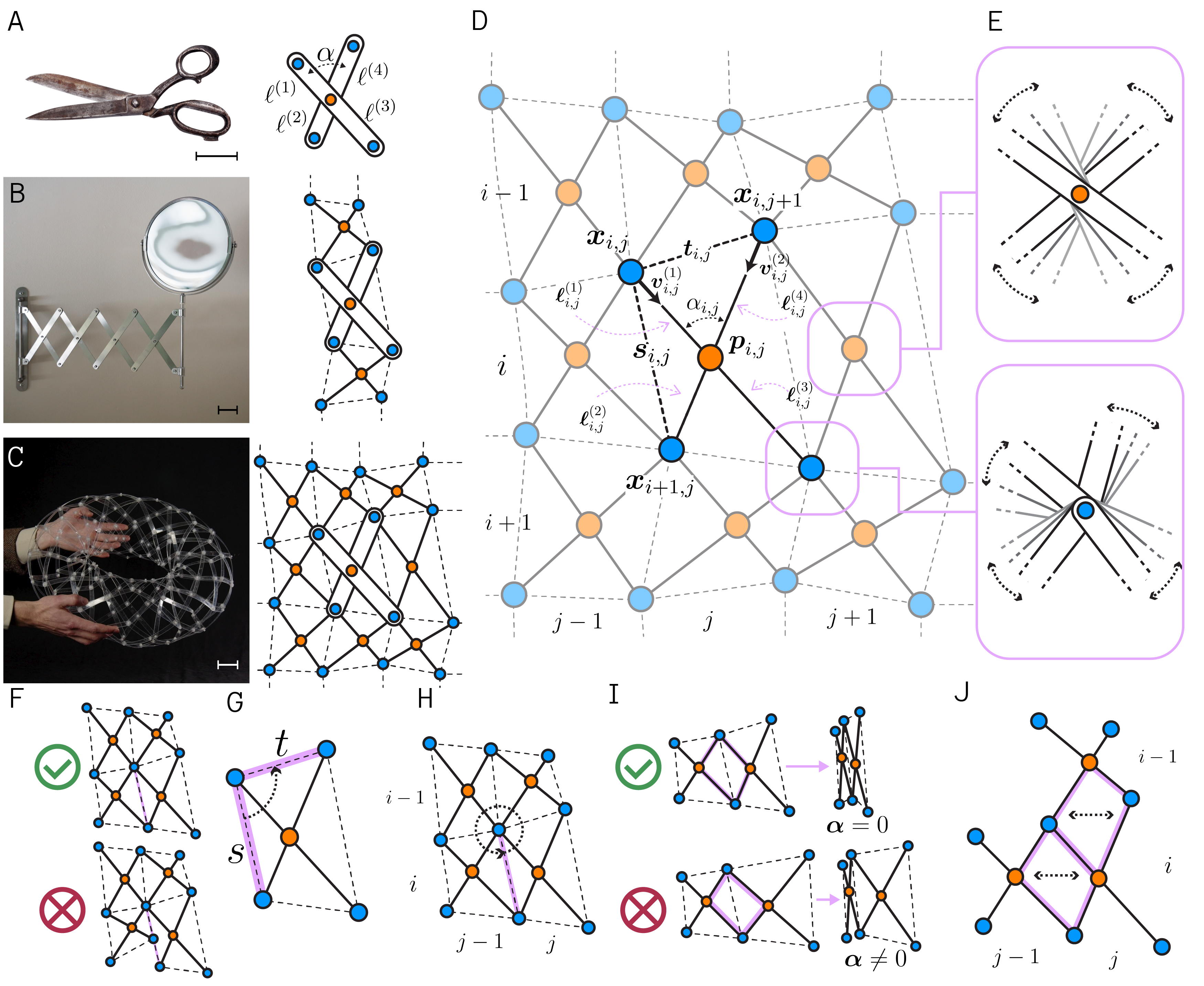}
\end{center}
\caption{
    \textbf{Geometry and Kinematics}
    (A-C) Physical examples of assemblies made from scissors arranged in 0-, 1-, and 2-D lattices. Scale bars represent 5cm.
    (A) The geometry of a single scissor is characterized by four leg lengths and an opening angle. 
    (B) The scissor mount for a bathroom mirror is an example of a 1-D chain of identical scissor mechanisms. 
    The coupling of opening angles allows for the telescopic extension of the mirror from the wall.
    (C) A \hg{} torus designed using the methods developed in this paper is characterised by a 2-D lattice of scissor elements.
	(D) \hg{} consists of a lattice of scissor-like unit cells (\cellsymbol{}) composed of two rods (represented by black lines) joined by a \textit{scissor pivot} (orange circle). 
	Multiple cells can be joined together using \textit{vertex pivots} (blue).
	Throughout this paper, we represent the geometry of each \cell{} by the combination of leg direction vectors ($\{\leg{1}, \leg{2}\}$) and leg lengths ($\{\leglength{1}\ldots \leglength{4}\}$).
	In addition we introduce the \textit{edge vectors}, $\bm{s}$ and $\bm{t}$ which lie along the upper and left sides of the \cell{} facet respectively.
	(E) scissor and vertex pivots are distinguished by the way they connect strips.
	(F) For a set of cells to be intrinsically compatible, the lengths of adjacent facets must be equal (see edges highlighted in pink).
	(G-H) We can derive a closure condition for intrinsic compatibility by considering mapping facet edge lengths around the vertex \vertex{i,j}.
	(I) \SA{} describes whether a \hg{} structure can be ``flattened'', in the sense that there exists an intrinsically valid state where all opening angles $\{\alpha\}$ are identically 0. Note that in the lower example one cell has $\alpha=0$ while in the other $\alpha>0$.
	(J) \SA{} (\eqprefix{}\ref{eq:collapsibility}) requires that the total length of the legs on the left of each pore must equal the total length of the legs on the right.
}
\label{fig:1_GEOMETRY_AND_KINEMATICS}
\end{figure*}


By considering a simple lattice of scissor based mechanisms, we find a set of compatibility conditions which are analogous to the constraints which govern the design of origami and kirigami \cite{Dudte2021-hq, Dudte2023-ik}, inspiring our name for these structures: karigami. 
We use these conditions to develop an additive design procedure that allows us to explore of the full space of collapsible structures. 
We finally present a sampling menu of possible structures and realize these as physical models. 

The atomic unit of a \hg{} structure is a two bar linkage, a \textit{\cell{}}, as illustrated in \figprefix\ref{fig:1_GEOMETRY_AND_KINEMATICS}\subfigname{A}.
Each \cell{} (denoted \cellsymbol{}) has 5 degrees of freedom, the 4 leg lengths $\leglength{a}$ (measured relative to the \cell{} pivot) and the opening angle $\alpha$ between them.
In practice, we will treat the leg lengths as design parameters, while retaining the opening angle $\alpha$ as a kinematic degree of freedom.
For convenience, we also introduce the notion of the \cell{} \textit{facet} as the quadrilateral bounded by the tips of the legs (note that the shape of the facet changes as a function of the deployment angle $\alpha$).

A set of scissors ($\{\cellsymbol{i}\}$) can be joined together to form a 1-D chain via a secondary set of joints we call \textit{vertex pivots} which connect pairs of legs at their tips; a common example is a bathroom mirror that can extend from the wall, shown in \figprefix{}\ref{fig:1_GEOMETRY_AND_KINEMATICS}\subfigname{B} which has both \textit{scissor pivots} (orange) and \textit{vertex pivots} (blue).
The introduction of vertex pivots couples the opening angles of every pair of adjacent scissors, and thus the overall geometry of the structure must be parameterized by a single kinematic variable $\theta$ (e.g. the opening angle of the first pair of scissors in the chain).

Extending this framework to a two dimensional cartesian lattice of scissors $\{\cellsymbol{i,j}\}$ yields a \textit{karigami} lattice (\figprefix\ref{fig:1_GEOMETRY_AND_KINEMATICS}\subfigname{C}), where again
 scissor pivots  join rods within a scissor and vertex pivots connect rods of multiple scissors. Note that vertex pivots constrain the position of connected rods but not their orientation; in 3D this flexibility permits vertices in karigami lattices to develop angle defects which in turn allows us to introduce curvature and control geometry. 
\footnote{From an engineering perspective both scissor and vertex pivots are properly revolute joints in the 2D setting, while in 3D the vertex pivots must be spherical joints to accommodate out-of plane displacements.
As we demonstrate with in physical models, however, revolute connections can be used for all joints -- even in 3D structures -- provided the rods are sufficiently flexible.}

As in the case of the one dimensional lattice, 
the opening angles of all \cell{}s are coupled so the kinematic configuration $\mathbf{C}\equiv \left\{ \alpha_{i,j} \right\}$ of the structure can be parameterized by a single number $\theta$:
$\mathbf{C} = \mathbf{C}(\theta) = \left\{\alpha_{i,j}(\theta)\right\}$.
Unlike the one-dimensional lattice, however, two dimensional lattices are generically rigid, and operation as a continuous mechanism requires a \KC{} condition which we will discuss in detail below.

We begin by analyzing the simplest nontrivial \hg{} lattice consisting of four \cell{}s joined with a single vertex pivot as shown in \figprefix{}\ref{fig:1_GEOMETRY_AND_KINEMATICS}\subfigname{F}.
An arbitrary set of four \cell{}s can not generally be joined into a valid \hg{} assembly due to the coupling of opening angles between cells, as illustrated in the lower sub-figure of \figprefix{}\ref{fig:1_GEOMETRY_AND_KINEMATICS}\subfigname{F}.
We say that a set of cells is \textit{intrinsically compatible} if, for some configuration, the lengths of the edges of abutting facets are equal.
\footnote{Of course, it is straightforward to design lattices of \cell{}s which are intrinsically compatible: start from a mesh of quads and draw the appropriate diagonals to construct the pivot center and legs.
However, such structures will be generically un-deployable as we discuss below.}

To derive conditions for the intrinsic compatibility we write down a closure condition involving the lengths of legs of the cells joined around a central vertex. For convenience we introduce the \textit{leg vectors} (
$\leg{1}[i,j]\equiv \bm{p}_{i,j} - \vertex{i,j} / \|\bm{p}_{i,j} - \vertex{i,j}\|$, 
$\leg{2}[i,j]\equiv \bm{p}_{i,j} - \vertex{i,j+1} / \|\bm{p}_{i,j} - \vertex{i,j+1}\|$~) which,
along with the leg lengths (\leglength{1}[i,j], \leglength{2}[i,j], \leglength{3}[i,j], \mbox{\leglength{4}[i,j]),} completely determine the geometry of each cell. In \figprefix\ref{fig:1_GEOMETRY_AND_KINEMATICS}\subfigname{E}, we show the relationship between the lengths of the left and top edges of a \cell{} ($s$ and $t$ respectively). 
If we assume $0<\alpha<\pi/2$ there is a linear map $f$ between $s^2$ and $t^2$ given by \eqprefix\ref{eq:cell_edge_relationship}, where we have introduced the symbols 
$\gamma^{(a,b,c)}\equiv {\leglength{b}}^2 + {\leglength{c}}^2 + \frac{\leglength{c}}{\leglength{a}}\left({\leglength{a}}^2 + {\leglength{b}}^2\right)$ 
and $\eta^{(a, c)} \equiv \frac{\leglength{c}}{\leglength{a}}$.

\begin{equation}
	\label{eq:cell_edge_relationship}
	t^2 = f_{i,j}^{(2,1,4)}\left( s^2 \right)
	\equiv \gamma^{(2, 1, 4)} + \eta^{(2,4)}s^2
\end{equation}

We now turn to the general case of a set of four \cell{}s arranged around a shared vertex, as illustrated in \figprefix\ref{fig:1_GEOMETRY_AND_KINEMATICS}\subfigname{H}. 
By recursively applying the relationship found between the adjacent edges of a single \cell{} (\eqprefix\ref{eq:cell_edge_relationship}) we can define a map $g$ that gives $s'^2$, the squared length of the boundary between \cell{}s \cij{i,j-1} and \cij{i,j}, given by traversing a loop around the vertex $\vertex{i,j}$:

\begin{equation}
	\label{eq:intrinsic_compatibility}
	\begin{split}
	g_{i,j}(s^2) 
	& \equiv
	\pivotrotationmap{1,4,3}{i,j-1} \circ
	\pivotrotationmap{4,3,2}{i-1,j-1} \circ
	\pivotrotationmap{3,2,1}{i-1,j} \circ
	\pivotrotationmap{2,1,4}{i,j} 
	\left( s^2 \right)
\end{split}
\end{equation}

A four-cell mechanism is geometrically overconstrained, and additional symmetries associated with \textit{\KC{}} are required to introduce internal degrees of freedom. Then, if for a continuous range of the kinematic parameter $\theta_0<\theta<\theta_1$ the cells are intrinsically compatible the following lemma must hold: 

\begin{lemma}
	If a four \cell{} mechanism is intrinsically valid in two kinematic states $\mathcal{S}(\theta_0)$, $\mathcal{S}(\theta_1)$ the mechanism admits a one parameter family of continuous deformations between them.
\end{lemma}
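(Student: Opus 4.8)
The plan is to exploit two structural facts about the loop map $g_{i,j}$ of \eqprefix\ref{eq:intrinsic_compatibility}: that it is \emph{affine} in the squared edge length, and that it is \emph{independent of the kinematic parameter} $\theta$. First I would read off from \eqprefix\ref{eq:cell_edge_relationship} that each single-\cell{} map has the form $f(x)=\gamma+\eta\,x$, where the coefficients $\gamma$ and $\eta$ are assembled only from the fixed leg lengths $\leglength{a}$ and therefore do not change as the mechanism deploys. Composing the four maps appearing in $g_{i,j}$ then gives a map of the same type,
\[
g_{i,j}(x)=A\,x+B,
\]
with $A$ the product of the four $\eta$ factors and $B$ a fixed nested combination of the $\gamma$'s and $\eta$'s; both $A$ and $B$ are constants fixed by the leg lengths alone, hence constant along the entire deformation.

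Next I would encode intrinsic validity at a configuration as the closure (fixed-point) condition $g_{i,j}(s^2)=s^2$. Writing $x(\theta)=s^2(\theta)$ for the squared length of the tracked boundary edge as a function of the kinematic parameter, validity at $\theta$ is equivalent to $h\bigl(x(\theta)\bigr)=0$, where $h(x)\equiv g_{i,j}(x)-x=(A-1)x+B$ is affine in $x$. The core of the argument is then elementary: if the mechanism is intrinsically valid at $\mathcal{S}(\theta_0)$ and $\mathcal{S}(\theta_1)$ with $x(\theta_0)\neq x(\theta_1)$, the affine function $h$ has two distinct roots and so vanishes identically, forcing $A=1$ and $B=0$, i.e.\ $g_{i,j}=\mathrm{id}$. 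Closure then holds for \emph{every} value of $x$, so the mechanism is intrinsically valid at every intermediate configuration; and because the opening angles $\alpha_{i,j}(\theta)$ (and hence the full geometry) depend continuously on $\theta$, these valid states assemble into the claimed one-parameter family of continuous deformations joining $\mathcal{S}(\theta_0)$ to $\mathcal{S}(\theta_1)$.

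The step I expect to be the main obstacle is justifying the hypothesis $x(\theta_0)\neq x(\theta_1)$ — that two genuinely distinct kinematic states yield distinct edge lengths. This reduces to showing that $s^2$ is a strictly monotonic (in particular injective) function of the opening angle over the admissible range $0<\alpha<\pi/2$, which I would verify from the explicit law-of-cosines geometry of a single \cell{}; if $s^2$ were instead stationary or non-injective in $\theta$, the two-point argument collapses and one would have to track a different edge. A secondary point to make rigorous is that the factors composed in $g_{i,j}$ remain the correct constraints once adjacent opening angles are coupled through the vertex pivot. I would argue this by noting that each $f$ is a purely intrinsic relation between two edges of a \emph{single} facet, valid at whatever opening angle that facet happens to take, so the coupling only restricts which $\theta$ are realized and never alters the maps themselves — which is precisely what lets $g_{i,j}$ be treated as a $\theta$-independent affine map.
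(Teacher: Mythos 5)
Your proposal is correct and follows essentially the same route as the paper's proof: the loop map $g_{i,j}$ is affine in the squared edge length with $\theta$-independent coefficients, an affine map with two distinct fixed points must be the identity, and hence intrinsic compatibility holds at every intermediate configuration. The only difference is one of care, not of approach — you explicitly flag and resolve the hypothesis $x(\theta_0)\neq x(\theta_1)$ (via monotonicity of $s^2$ in the opening angle), a step the paper's proof passes over silently.
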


\begin{proof}

Four arbitrary \cell{}s are intrinsically compatible if, for some kinematic state (parameterized by $\theta$), $s(\theta)^2 = g(s(\theta)^2) $ holds.

We note that as $g$ is a composition of linear functions and therefore is itself linear (\eqprefix{}\ref{eq:intrinsic_compatibility}).
Then, if are two distinct values ($s^2$) of under which $s^2 = g(s^2)$ holds it must be the case that $g$ is the identity map (i.e., $g(s^2) = s^2$).
Thus \IC{} holds for all states $\theta_0<\theta<\theta_1$.
\end{proof}

\begin{figure*}[hbt!]
\begin{center}
	\includegraphics[width=\textwidth]{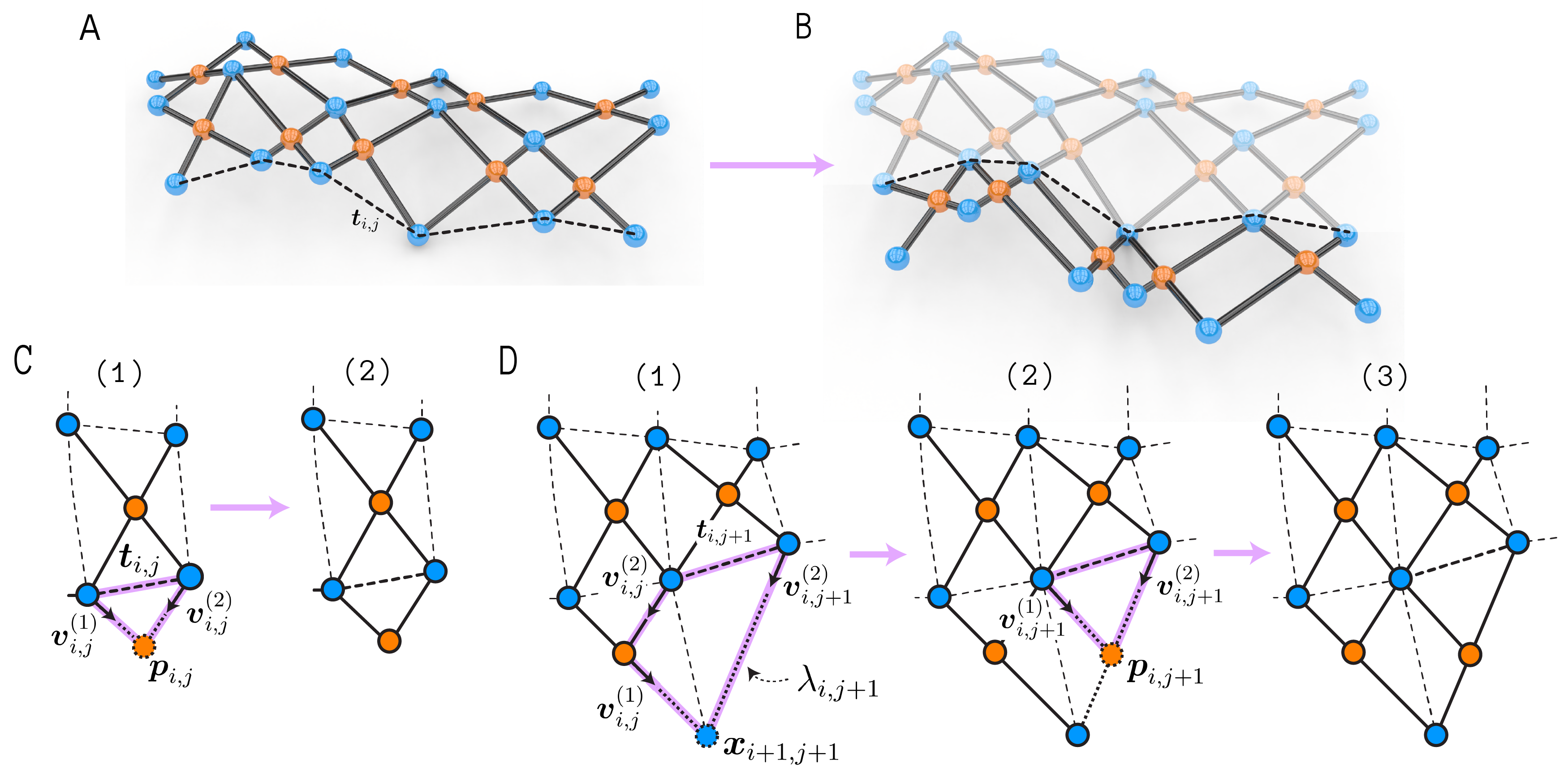}
\end{center}
\caption{
    \textbf{Constraints and Construction}
    (A) An existing karigami lattice, with the growth front ($\{\boundary[i,1], \ldots , \boundary[i,n]\}$) identified using a dotted line.
    (B) The additive algorithm constructs a new row of \cell{}s along the growth front.
	(C) Construction of the scissor pivot of a new \cell{} at site $j$ along the growth front. 
    (C.1) We are free to a leg direction \leg{2}[i,j].
    The triangular loop highlighted in (C.1) then corresponds to the vector closure condition \eqprefix{}\ref{eq:2D_pivot_closure}. 
    (C.2) \eqprefix{}\ref{eq:1D_leg_length} sets \leglength{4}[i,j], and thereby fixes the location of pivot \pivot[i,j] (as well as \leg{1}[i,j], \leglength{1}[i,j]).
	(D) Construction of an scissor pivot \pivot[i,j+1] adjacent to a \cell{} (\cellsymbol{i,j}) with an existing scissor pivot.
    (D.1) First, we construct the unique vertex pivot \vertex{i+1,j+1} making use of the vector closure condition highlighted in pink.
    (D.2) Second, we construct \pivot[i,j+1] according to the same process as shown in (C), with the direction \leg{2}[i,j+1] set by the location of the (now fixed) vertex \vertex{i,j+1}.
    (D.3) The new row of karigami with \pivot[i,j] and \pivot[1,j+1] constructed.
}
\label{fig:2_ADDITIVE}
\end{figure*}

In practical terms, if a four-cell mechanism with fixed leg lengths has two distinct deployed states which are intrinsically compatible, it can be kinematically deformed between them.
Note that we have made no assumption about the relative orientations of the facets about the central vertex \vertex{i,j}; indeed, facets are generically not parallel to one another, generating a surplus or deficit of angular material localized at the vertex as previously discussed.

For a larger structures with more than one bulk vertex pivot (e.g. \figprefix\ref{fig:1_GEOMETRY_AND_KINEMATICS}\subfigname{B}), similar \KC{} conditions must hold about every loop of \cell{}s surrounding one or more vertices. If there are two distinct deployed states ($\mathbf{C}(\theta_1)$, $\mathbf{C}(\theta_2)$) for the entire lattice, \eqprefix{}\ref{eq:intrinsic_compatibility} will be satisfied  for the continuum of states $\{\mathbf{C}(\theta)| \theta_1<\theta<\theta_2\}$.

\textbf{Collapsibility}
In addition to \KC{}, we introduce a geometric property which we term \SA{}. Analogous to the notion of developability in origami~\cite{Dudte2021-hq} or contractibility in kirigami~\cite{Dudte2023-ik}, \SA{} 
describes whether a structure can be completely ``flattened'' with every \cell{} in either a closed state as illustrated in \figprefix\ref{fig:1_GEOMETRY_AND_KINEMATICS}\subfigname{I}. 
To derive the conditions for collapsibility consider the set of three adjacent \cell{}s $\{ \cij{i,j-1}, \cij{i-1,j}, \cij{i,j} \}$, as illustrated in \figprefix\ref{fig:1_GEOMETRY_AND_KINEMATICS}\subfigname{J}.
In order for all three \cell{}s to fold to a state $\alpha\rightarrow 0$ (close-sheared) the sum of the length of legs on the left of each quad must equal the sum of the lengths of the legs on the right 
\footnote{We note that an alternate collapsed state may exist in which the cells are fully ``opened'' in the sense that for some $\theta=\theta_o$, we have $\bm\alpha(\theta_o) = \pi/2$ (\eqprefix{}S.1 gives the formulas analagous to \eqprefix{}\ref{eq:collapsibility} for such ``open'' collapsed states).
As the conditions for 0- and $\pi/2$- collapsibility are equivalent up to a change of coordinates $\alpha \rightarrow \pi/2 - \alpha$ and the only \hg{} lattices which have two collaped states can be trivially mapped to chebyshev nets (see discussion in \SIprefix{}S.I), we will simply refer to collapsibility in the sense defined above.
}. 

\begin{subequations}
	\begin{align}
		\leglength{3}[i,j-1] + \leglength{4}[ i,j-1 ] = \leglength{1}[ i,j ] + \leglength{2}[ i,j ] & \quad \text{(horizonal)} \label{eq:collapsibility_horizontal} \\
		\leglength{2}[i-1,j] + \leglength{1}[i,j] = \leglength{3}[ i-1,j ] + \leglength{4}[ i,j ] & \quad \text{(vertical)} \label{eq:collapsibility_vertical}
\end{align}\label{eq:collapsibility}
\end{subequations}

\eqprefix{}\ref{eq:collapsibility} generalizes the collapsibility conditions for a one dimensional chain of scissors found by \cite{Escrig1993-uk}.


\begin{figure*}[hbt!]
\begin{center}
	\includegraphics[width=\textwidth]{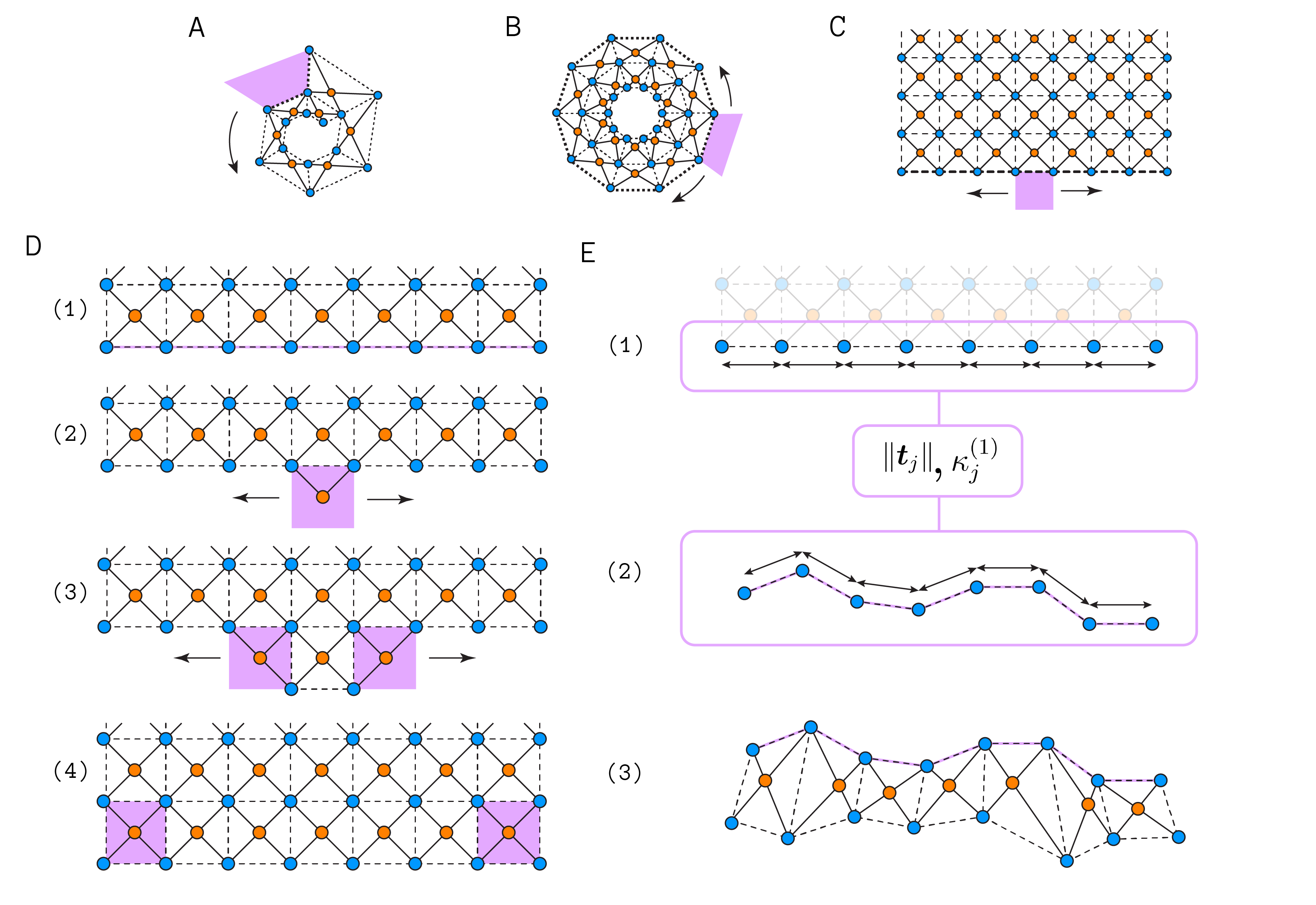}
\end{center}
\caption{
	\textbf{Growth of Surfaces}
    (A-C) \hg{} surfaces can be categorized by the topology of their constitutive growth fronts (equivalently, by the topology of the seed with which these surfaces were initialized). 
    Thus we distinguish between helices (A) which are fully determined by their initial seed, loops (B) with one free edge, and sheet-like surfaces (C) with two free edges. 
    (D) In the \textit{extrinsic algorithm} we consider constructing a new row of \cell{}s along the edge of an existing \hg{} structure---in this case, the last row of the structure shown in (C). 
    The algorithm is initialized by first constructing a pivot at a site $j$ (D.2). 
    Using our algorithm we can construct the \cell{}s left and right along the front (D.3) until a full row of \cell{}s have been built (D.4).
    (B-D) \hg{} can be classified by the topologies of the ``seed'' row by which they are initialized.
    In \figprefix{}\ref{fig:4_EXAMPLES} we demonstrate structures designed with each of these seed topologies.
    E) The \textit{intrinsic} algorithm generalizes the growth procedure presented in (D) by considering not only the growth front given by the space curve $\{t_j\}$, but a class of compatible fronts.
    (E.1) We begin with an existing karigami growth front and extract the sequence of edge lengths 
    $\{|t_j|\}$ and shearing $\{k_j\}$ parameters for each site.
    (E.2) A compatible growth front is given by any space curve with the same sequence of edge lengths extracted from the karigami front as in (E.1).
    (E.3) Finally, we apply the extrinsic algorithm to the modified front to construct a strip of cells  that is \IC{} and \KC{} with the original karigami structure.
}
\label{fig:3_ALGORITHM}
\end{figure*}

\textbf{Additive Construction}
The notions of \KC{} and \SA{} allow us to pose the question of designing \hg{} lattices as an additive construction problem by asking how to attach a new \cell{} to an existing \hg{} lattice (shown schematically in \figprefix\ref{fig:2_ADDITIVE}) such that the resulting structure remains both kinematically compatible, and collapsible. 

This corresponds to an iterative algorithm that additively builds an karigami lattice by adding scissors that satisfy \eqprefix{}\ref{eq:intrinsic_compatibility}, \ref{eq:collapsibility}.
This allows us for us to design the collapsed and deployed states simultaneously, ensuring that there is a continuous kinematic path between them.
We note that this deployment path generically passes through states of geometric incompatibility, analogous to those present when folding or unfolding origami \cite{Dudte2016-va, Choi2019-nr}.


To characterize the geometry of adding scissor pivots along the growth front, we first note the following theorem: 

\begin{theorem}\label{theorem:additive_algorithm}
	
The space of scissor pivots along the growth front of a \hg{} lattice in $D$ dimensions has $(D-1)$ DOF. 

\end{theorem}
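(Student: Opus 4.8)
The plan is to obtain the dimension count directly from the explicit pivot construction of \figprefix\ref{fig:2_ADDITIVE}\subfigname{C}, by regarding the placement of a new scissor pivot $\pivot[i,j]$ as that of a single point in the ambient $D$-dimensional space and then subtracting the constraints that the growth front imposes. A priori $\pivot[i,j]$ is an arbitrary point and carries $D$ degrees of freedom, so the content of the theorem is that the front removes exactly one of them.

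First I would split these $D$ degrees of freedom using the two quantities that the construction itself separates. Centered at the fixed front vertex $\vertex{i,j+1}$, the pivot is specified by the unit leg direction $\leg{2}[i,j]$ together with the distance $\leglength{2}[i,j]$ travelled along it. The direction $\leg{2}[i,j]$ is the free choice of step~(C.1): as a unit vector it ranges over the sphere $S^{D-1}$ and supplies $(D-1)$ degrees of freedom, while the distance supplies the remaining one. At this stage the vector closure condition (\eqprefix\ref{eq:2D_pivot_closure}) is not an independent constraint; given $\leg{2}[i,j]$ and the fixed vertices $\vertex{i,j}$, $\vertex{i,j+1}$ it merely expresses the companion leg $\leg{1}[i,j]$ and its length $\leglength{1}[i,j]$ in terms of the still-undetermined pivot, and so consumes no freedom.

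Next I would show that the one remaining scalar, the distance along $\leg{2}[i,j]$, is pinned by the single leg-length relation of step~(C.2) (\eqprefix\ref{eq:1D_leg_length}), which sets $\leglength{4}[i,j]$ from the front data. Combined with the vertical collapsibility condition (\eqprefix\ref{eq:collapsibility_vertical}) relating the new cell to its neighbor below, this prescribes the target value of $\leglength{1}[i,j]=\|\pivot[i,j]-\vertex{i,j}\|$, which is exactly one scalar equation for the distance travelled along $\leg{2}[i,j]$. No further conditions act on this seed pivot, since the horizontal collapsibility condition (\eqprefix\ref{eq:collapsibility_horizontal}) would require a left neighbor that does not yet exist. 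Subtracting this lone constraint from the $D$ ambient degrees of freedom leaves $(D-1)$. To pass from the seed pivot to the whole front I would then appeal to \figprefix\ref{fig:2_ADDITIVE}\subfigname{D}: each subsequent pivot inherits the analogue of its free direction from a vertex already fixed (through the pink closure loop) when its predecessor was placed, so it contributes no new freedom, and the entire new row is parameterized by the single seed direction.

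The step I expect to be the main obstacle is the non-degeneracy underlying this subtraction: one must verify that \eqprefix\ref{eq:1D_leg_length} is a genuine transverse constraint removing exactly one dimension, neither zero nor more. Concretely this means showing that for generic front data the prescribed distance along $\leg{2}[i,j]$ exists, is locally unique (the defining relation is quadratic in the distance, so a real admissible root must be selected), and yields an opening angle in the admissible range $0<\alpha<\pi/2$; equivalently, that the relevant derivative of the leg-length map is nonzero and that the propagation along the front---including the existence and uniqueness of each shared vertex obtained from the closure loop---never becomes singular. As a consistency check, for $D=2$ the count should collapse to the single opening-angle freedom of a planar scissor chain, and establishing the non-degeneracy guarantees that in general the $(D-1)$ parameters trace out a smooth family rather than a degenerate or empty locus.
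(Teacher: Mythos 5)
Your proposal is correct and follows essentially the same route as the paper's own proof: a free choice of the unit leg direction \leg{2}[i,j] contributing the $(D-1)$ DOF, the remaining length \leglength{4}[i,j] pinned uniquely by combining the vector closure condition (\eqprefix\ref{eq:2D_pivot_closure}) with vertical collapsibility (\eqprefix\ref{eq:collapsibility_vertical}) to obtain \eqprefix\ref{eq:1D_leg_length}, and then propagation of uniquely determined pivots site-by-site along the front so that no further freedom appears (the paper packages this last step as a composition of bijective transfer functions, which also covers growth in both directions $j'>j$ and $j'<j$). The non-degeneracy you flag as the main obstacle is in fact immediate from the paper's formula: squaring the closure relation and substituting the collapsibility constraint $\leglength{1}[i,j]=\leglength{4}[i,j]-\kappa_{i,j}$ cancels the quadratic terms, so the defining relation is \emph{linear} in \leglength{4}[i,j] with a unique solution whenever $\kappa_{i,j}+\bm{t}_{i,j}\cdot\leg{2}[i,j]\neq 0$ --- precisely the restriction to ``valid leg directions'' that the paper imposes.
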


\begin{proof}
	Our proof can be broken down into three steps: 
	first, conditions for construction of a single scissor pivot on a growth front of a \hg{} lattice;
	second, construction of adjacent scissor pivots (i.e. the scissor pivot of adjacent \cell{}s); 
	and finally construction of all scissor pivots along a growth front.

	\textit{Single \cell{} Construction}
Consider adding a \cell{} at site $j$ (i.e. with upper boundary given by $\boundary[i,j]$) to the edge of a \hg{} lattice as illustrated in \figprefix\ref{fig:2_ADDITIVE}\subfigname{A}. 
	To begin the construction we first specify the direction of the leg direction \leg{2}[i] (e.g. in 2D by a rotation angle $\phi\in\left(0,\pi\right)$ about \vertex{i,j+1} relative to $\bm{t}$) as shown in \figprefix\ref{fig:2_ADDITIVE}\subfigname{C.1}.
	Now, since the lengths $\leglength{1}[i,j]$, $\leglength{4}[i,j]$ and the remaining leg direction $\leg{0}[i,j]$ are coupled by the vector closure condition \eqprefix\ref{eq:2D_pivot_closure}, 
	all three are determined once any one is known (see \figprefix{}\ref{fig:2_ADDITIVE}\subfigname{C.2}).

	\begin{equation}
		\leglength{1}[i,j]\leg{1}[i,j] = \boundary[i,j] + \leglength{4}[i,j]\leg{2}[i,j]
	\label{eq:2D_pivot_closure}
	\end{equation}

	We can solve \eqprefix{}\ref{eq:2D_pivot_closure} together with \eqprefix\ref{eq:collapsibility_vertical} to find the unique leg length \leglength{4}[i] as given by \eqprefix{}\ref{eq:1D_leg_length}. 
	Here we introduce a constant $\kappa_{i,j}\equiv \leglength{2}[i-1,j]-\leglength{3}[i-1,j]$ we term the \mbox{\textit{collapsibility parameter}} which measures the asymmetry between the legs of the preceding cell on the growth front. 

	\begin{equation}
		\leglength{4}[i,j] = \frac{{\kappa_{i,j}}^2 - ||\bm{t}_{i,j}||^2}{2 \left(\kappa_{i,j}+ \bm{t}_{i,j}\cdot \leg{2}[i,j]\right)}
		\label{eq:1D_leg_length}
	\end{equation}
	
	Setting \leglength{4}[i] according to \eqprefix{}\ref{eq:1D_leg_length} fixes the location of the \cell{} pivot \pivot[i,j], as illustrated in \figprefix{}\ref{fig:2_ADDITIVE}\subfigname{C.2}.
	(Note that one could have equivalently proceeded by setting \leg{1}, and solving \eqprefix{}\ref{eq:2D_pivot_closure} and \eqprefix\ref{eq:collapsibility_vertical} for \leglength{4}. 
    In either case, we find the unique scissor pivot \pivot[i,j] which satisfies \SA{} with the \cell{} above, \cellsymbol{i-1,j}.)
	For derivation of \eqprefix{}\ref{eq:1D_leg_length} and further discussion see \SIprefix{}S.II.

	If $m=1$ this is a chain and the \cell{} has no neighbors there are no further pivots to construct, and the only freedom we have is to set the direction of a single leg vector, \leg{2}[0], i.e. $D-1$ DOF. In the case $m>1$, we will show the location of adjacent pivots (\pivot[i,j], \pivot[i,j+1]) are coupled by the \SA{} equations (\eqprefix{}\ref{eq:collapsibility}) associated with the corresponding \cell{}s (\cellsymbol{i,j}, \cellsymbol{i,j+1}).
	
	\textit{Construction of adjacent \cell{}s}
    We now show the existence of one scissor pivot \pivot[i,j] along the growth front gives rise to a unique compatible neighbor, \pivot[i,j +1]. Assuming that the scissor pivot \pivot[i,j] of the preceding \cell{} \cellsymbol{i,j} is known,  the location of vertex pivot \vertex{i+1,j+1} is uniquely determined (see \SIprefix{}S.II and \figprefix\ref{fig:2_ADDITIVE}\subfigname{D.1}).  


    The new pivot fixes the leg direction \leg{2}[i,j+1], of this scissor (see \figprefix{}\ref{fig:2_ADDITIVE}\subfigname{D.2}) and we can thus immediately use the preceding result for single pivot construction to establish the location of \pivot[i,j+1]. 
    (Details of this derivation provided in \SIprefix{}S.II.)  Iterating this procedure along the growth front allows us to construct a string of scissor pivots one site at a time.

\begin{figure*}[htb!]
\begin{center}
	\includegraphics[width=\textwidth]{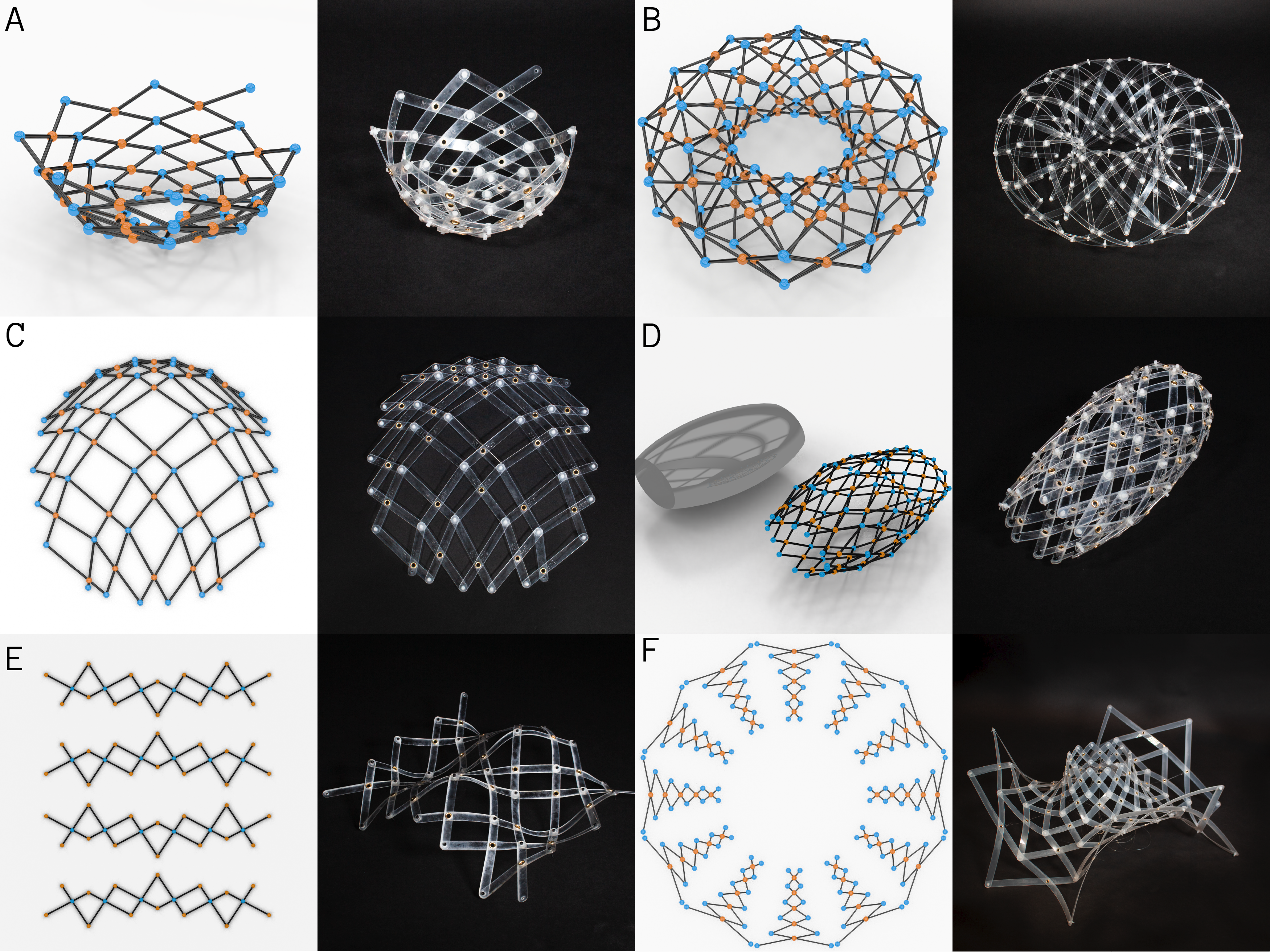}
\end{center}
\caption{
	Karigami structures designed using our extrinsic (examples A-D) and intrinisic (examples E,F) algorithms. 
    (See \SIprefix{}S.IV for details of surface construction.)
    (A) A bowl constructed from a helix-like seed with 32 steps of growth.
    (B) An axisymmetric torus. 
    Here the initial seed is a loop lying along a nodal curve along the top (or bottom) of the torus.
    (C, D) We design a lattices which map the surface of a disk and an ellipsoid. 
    Here we optimize over the growth parameters for each row (i.e. initial pivot locations and terminal leg lengths) to fit the target shapes. 
    (E) Using the intrinsic algorithm we develop an eggbox-like surface with multi-stable "nodes" which can be flipped in the deployed state (and reset by closing the karigami).
    (F) We also use the intrinsic algorithm to design a surface of constant negative Gaussian curvature.
}
\label{fig:4_EXAMPLES}
\end{figure*}

 	\textit{Growth along entire front.}

    The preceding observation can be formalized by recognizing that our geometric construction implies a bijective and invertible transfer function $g_{i}$ which maps the pivot locations between adjacent \cell{}s: 

	\begin{equation}
		\pivot[j+1] = g_{j+1}\left( \pivot[j] \right)
	\end{equation}

	Now consider constructing \pivot[ i,j' ] given \pivot[i,j] assuming $j'>j$. 
	We can do so using a composition of the transfer function 

    \begin{equation}
    \begin{split}
        \pivot[i,j'] &= g_{i,j'-1}(g_{i,j'-2}(
		\cdots g_{i,j+1}(g_{i,j}(\pivot[i,j])) 
		\cdots )) \\
        &= f(\pivot[i,j])
    \end{split}
	\end{equation}

	As all $g$ are bijective their composition is bijective and thus all leg directions $\left\{\leg{2}\right\}$ along the front are set by a single leg direction \leg{2}[i,j]. 
    An equivalent argument holds for $j'<j$ using a composition of inverse transfer functions $g^{-1}i$.
    Thus, for any front with m scissors ($m\geq 1$), the space of all possible new scissor pivot locations is set by any single leg direction, which has $D-1$ degrees of freedom.

\end{proof}

Given an existing \hg{} lattice, \thmprefix{}\ref{theorem:additive_algorithm} guarantees we can construct a new rows of \cell{}s (\figprefix{}\ref{fig:3_ALGORITHM}\subfigname{D}) such that the complete structure satisfies \KC{} and \SA{}. 
This suggests a natural procedure to design \hg{} by iterative assembling each successive row of \cell{}s.
We distinguish between and describe two versions of this procedure: the Extrinsic Algorithm develops the geometry of flat-facet karigami, 
while the Intrinsic Algorithm enables the construction of bent-facet karigami.
We note that in either case, the construction procedure must be constrained to the set of valid leg directions at each site (see discussion in derivation of \thmprefix{}\ref{theorem:additive_algorithm} and \SIprefix{}S.II).

To initiate construction using either algorithm, we must first specify a minimal seed consisting of (1) an initial set of vertices which form the upper boundary of a row of \cell{}s and (2) a set of $\kappa$ values that set the leg length differences of this initial row.
We identify three classes of karigami seeds  (see \figprefix{}\ref{fig:3_ALGORITHM}\subfigname{A-C}), which are distinguished by the geometry and topology of the resulting structures: helical, tubular, and sheet karigami (See \SIprefix{}S.III for further discussion.)

\textbf{Extrinsic Algorithm: Flat-Facet karigami}

To design flat-facet karigami (\figprefix{}\ref{fig:3_ALGORITHM}\subfigname{D}), we assign a growth front to an existing karigami structure (or choose an karigami seed, as described above), and construct successive rows of \cell{}. 
Explicitly, one proceeds as follows: 

\begin{itemize}
	\item Starting at any site $j$ along the front, assign the direction \leg{2}[i,j].
	\item Propagate \leg{2}[i,j] to all other \cell{}s $\left(\cellsymbol{ i,0 }, \cdots{} \cellsymbol{ i,m }\right)$ along the front by iteratively applying the geometric construction procedure previously discussed.
	\item At boundaries, assign the initial and final leg lengths (\leg{2}[i,0], \leg{3}[i,m]).
	\item Identify a new growth front along any boundary and repeat this procedure to grow a new strip.
\end{itemize}

\textbf{Intrinsic Algorithm: Bent-Facet karigami}
To design bent-facet karigami (\figprefix{}\ref{fig:3_ALGORITHM}\subfigname{E}) we introduce an additional step in which we modify the geometry of the growth front before the construction procedure. 


Note that while the lengths between vertices on the growth front and corresponding shearing parameters are fixed by the geometry of the preceding row (which, taken together, guarantee \KC{} and \SA{}), we are free to pick the orientation of each facet vector along the growth front before we construct a new row of \cell{}s.
Thus, our construction proceeds as follows:

\begin{itemize}
	\item Extract the shearing parameters $\{\kappa_{j}\}$ and lengths of facets $\{\|\boundary{j}\|\}$ along the growth front.(\figprefix{}\ref{fig:3_ALGORITHM}\subfigname{E.1})
	\item Construct a new compatible growth front (i.e. a front with equivalent $\{\kappa_{j}\}$, $\{\|\boundary{j}\|\}$) (\figprefix{}\ref{fig:3_ALGORITHM}\subfigname{E.2})
	\item Apply the extrinsic algorithm to the new front (\figprefix{}\ref{fig:3_ALGORITHM}\subfigname{E(3)})
\end{itemize}


The physically joining the scissor strips designed using this procedure generally involves geometric incompatibilities and bending of \cell{} facets.
As such the deployed geometry of a structure designed using Algorithm II will depend on the material properties used for their physical realization (see \figprefix{}S.7).

We can extend this procedure for programming intrinsic geometry to entire structures: at each stage of growth we first design a compatible seed $\{\boundary[i,j]^*\}$ which preserves the lengths of the extrinsic growth front are preserved, i.e. $\forall{i} : \|\boundary[i,j]^*\| =\|\boundary[i,j]\|  $ and then use this compatible seed to construct the subsequent strip of \cell{}s. 

\textbf{Examples}
  In \figprefix{}\ref{fig:4_EXAMPLES}, we demonstrate the flexibility of our additive procedures by designing a few different classes of \hg{} using the two additive algorithms. For all designs, we also show desktop-scale karigami pieces built using laser-cut flat strips of abs plastic (\qty{1}{cm} in width) joined with eyelets at the vertex and scissor pivots (see \SIprefix{}S.V for details).

In \figprefix{}\ref{fig:4_EXAMPLES}\subfigname{A} we construct a helical karigami with loop lentgh $N=9$. 
All helical karigami are fully determined by the initial seed, which in this case is fixed to be a helix of decreasing radius (see \SIprefix{}S.IV).
In \figprefix{}\ref{fig:4_EXAMPLES}\subfigname{B}  we construct an approximation to a torus, developed from a loop-like seed and represents a more general result, namely that any axisymmetric surface (free of self crossings) and admits an karigami approximation of arbitrary accuracy (see \SIprefix{}S.IV for discussion). Here the initial seed consists of 12 equally spaced points along the nodal line of the torus, while subsequent rows are constructed by fixing an initial vertex to coincide with circles around the waist of the target torus (see \SIprefix{}S.IV for discussion).
For our third example we approximate a flat disk (\figprefix{}\ref{fig:4_EXAMPLES}\subfigname{C}).
Here we take the position of the initial seed (constrained to lie along one quarter-arc of the boundary of the target disk), leg direction \leg{2}[i,0], and boundary leg lengths of each row as variables and search for configurations in which boundary vertices lie on the circumference of the target circle. In our fourth example we map the surface of a half ellipsoid (\figprefix{}\ref{fig:4_EXAMPLES}\subfigname{D}). 
(See \SIprefix{}S.IV  for details of implementation.)


Finally, we demonstrate two structures built using the intrinsic algorithm. In \figprefix{}\ref{fig:4_EXAMPLES}\subfigname{E}, we show a surface of mixed curvature inspired by the origami eggbox pattern. 
Here regions of positive curvature are multistable, and can be ``popped through'', while collapsing the lattice removes the curvature defects and resets the system to an initial state. In \figprefix{}\ref{fig:4_EXAMPLES}\subfigname{F}, we design a surface of constant negative Gaussian curvature, using a sequence of identical sub-units which can be repeated in a loop to form one ring of the final structure (see details in \SIprefix{}S.IV).
This surface is analagous to examples demonstrated by \cite{Ono2022-nr, Ono2024-ix}.


Inspired by the simple shearing motion of a pair of scissors, we have provided a method to characterize, design and build a new linkage-based geometric metamaterial: karigami. 
The underlying geometric constraints allowed us to explore the space of collapsible karigami structures and create a selection of surfaces in 2 and 3- dimensions. While our work so far has been agnostic to the deployment pathways of these devices and to the associated mechanical behavior, these are natural questions for future investigation. 
Following in the footsteps of its fold- and cut-based cousins, we hope karigami will offer a useful geometric vocabulary for the design of mechanical devices from the micro- to architectural scales.

{\bf Ackowledgments.} We thank the Harvard NSF MRSEC 20-11754, the Simons Foundation, and the Henri Seydoux Foundation for partial financial support. 

\printbibliography

\newpage{}

\onecolumn{}

\end{document}